%% file: short_final.tex
\documentclass[letterpaper]{article}
\usepackage{spconf}
\usepackage[T1]{fontenc}
\usepackage[utf8]{inputenc}
\usepackage{amsmath,amssymb,amsthm,booktabs,graphicx,tikz}
\usepackage{cite,balance}
\usepackage{microtype}
\usepackage{pgfplots}
\pgfplotsset{compat=1.16}
\usepgfplotslibrary{groupplots}
\usepackage{orcidlink}
\newcommand{\etal}{\textit{et~al.}}
\usepackage{xurl}
\usepackage{pifont}
\usepackage{xcolor}
\definecolor{okgreen}{RGB}{0,128,0}
\definecolor{nored}{RGB}{178,34,34}
\newcommand{\cmark}{\textcolor{okgreen}{\ding{51}}}
\newcommand{\xmark}{\textcolor{nored}{\ding{55}}}

\usetikzlibrary{arrows.meta,positioning,calc}
\hypersetup{
  pdftitle={Exact Factorisation and Fast GPU Computation of Invertible Constant-Q Transforms},
  pdfauthor={Facundo Franchino and Eloi Moliner and Vesa V\"alim\"aki}
}
\newcommand{\R}{\mathbb R}
\newcommand{\C}{\mathbb C}
\newcommand{\diag}{\operatorname{diag}}
\newcommand{\Rea}{\operatorname{Re}}
\newcommand{\norm}[1]{\lVert#1\rVert}
\newtheorem{theorem}{Theorem}
\title{Exact Factorisation and Fast Computation\\of Invertible Constant-Q Transforms}
\name{Facundo Franchino$^1$ \orcidlink{0009-0008-6654-0401} \qquad Eloi Moliner$^2$ \orcidlink{0000-0001-5719-326X} \qquad Vesa V\"alim\"aki$^2$ \orcidlink{0000-0002-7869-292X}}
\address{$^1$Massachusetts Institute of Technology, Cambridge, MA, USA\\
$^2$Acoustics Lab, 
Dept. of Information and Communications Eng., 
Aalto University, Espoo, Finland}

\usepackage[nolist]{acronym}
\begin{acronym}
\acro{stft}[STFT]{short-time Fourier transform}
\acro{istft}[iSTFT]{Inverse Short-Time Fourier Transform}
\acro{tf}[T-F]{time-frequency}
\acro{fft}[FFT]{fast Fourier transform}
\acro{dft}[DFT]{discrete Fourier transform}
\acro{tfrs}[TFRs]{Time-Frequency Representations}
\acro{cqt}[CQT]{constant-$Q$ transform}
\acro{ldm}[LDM]{Latent Diffusion Model}
\acro{snr}[SNR]{Signal-to-Noise Ratio}
\acro{nsgt}[NSGT]{nonstationary Gabor transform}
\acro{erb}[ERB]{Equivalent Rectangular Bandwidth}
\end{acronym}

\makeatletter
\g@addto@macro\small{%
  \setlength\abovedisplayskip{4pt plus 2pt minus 1pt}%
  \setlength\belowdisplayskip{4pt plus 2pt minus 1pt}%
  \setlength\abovedisplayshortskip{0pt plus 2pt}%
  \setlength\belowdisplayshortskip{2pt plus 2pt minus 1pt}}
\makeatother
\begin{document}
\frenchspacing
\ninept
\raggedbottom
\maketitle
\pagestyle{empty}
\thispagestyle{empty}

\begin{abstract}
The constant-$Q$ transform (CQT) represents audio on a logarithmic frequency axis.
Its nonstationary Gabor formulation is exactly invertible, but the unequal numbers of time coefficients in its bands complicate GPU computation.
An exact factorisation combines spectral selection, conjugation, windowing, and reordering into a fixed map between one packed Fourier transform and the shorter band inverse transforms.
%Each band keeps its original length, while its spectral operations and inverse transform share a kernel.
The factors give waveform reconstruction, real adjoints for backpropagation, and bounds on arithmetic depth and block width; overlapping slices permit streaming with bounded memory.
%On NVIDIA A100 and V100 GPUs, 
Tests on two GPU models show that Flash-CQT reduces analysis–synthesis round-trip time by factors of two to eight relative to a baseline computing the same CQT.
The proposed implementation also uses over 30\% less peak temporary workspace and reaches a negligible reconstruction error, with a signal-to-noise ratio of about 130 dB, in single-precision floating-point arithmetic.
%The gains make the invertible CQT less costly for spectral processing and learning while preserving its resolution.
These advances make Flash-CQT a practical, computationally efficient front end for spectral analysis and modern audio machine-learning systems.
\end{abstract}

\begin{keywords}Audio systems, fast Fourier transforms, Gabor frames, GPU computation, spectral analysis
%Constant-Q transform,
\end{keywords}

%\vspace{-3pt}
\section{Introduction}
%\vspace{-3pt}

Audio analysis often begins with a time-frequency representation~\cite{smithSASP,purwins2019deep}.
The widely used \ac{stft} analyses every frequency band with the same window, giving uniform resolution across the spectrum~\cite{allen1977stft,smithSASP}.
Equal musical intervals, however, correspond to equal frequency ratios, which motivates geometrically spaced analysis bands.
The \ac{cqt} uses this spacing and scales each window inversely with its centre frequency, keeping the ratio $Q$ of centre frequency to bandwidth constant~\cite{brown1991cqt}.
Long windows distinguish nearby low-frequency components, while short windows resolve rapid changes at high frequencies.
On the CQT frequency axis, equal pitch intervals span equal distances; its narrower low-frequency bands also reflect a broad feature of auditory frequency selectivity~\cite{glasberg1990erb}.
These properties make it useful for pitch estimation, transcription and music analysis~\cite{muller2015fmp,schorkhuber2010cqttoolbox}.

Although often used for analysis alone, the \ac{cqt} can also be made exactly invertible.
%so that its coefficients determine the waveform.
%The synthesis operation can be used to reconstruct, potentially edited or generated, coefficients to the waveform.
Its synthesis operation then maps coefficients, whether analysed, edited, or generated, back to a waveform.
The \ac{nsgt} provides the construction for such invertible transforms with frequency-dependent windows~\cite{balazs2011nsgt}.
Velasco \etal\ used it to obtain an exactly invertible \ac{cqt}~\cite{velasco2011icqt}, whereas Sch\"orkhuber and Klapuri achieved reconstruction by other means~\cite{schorkhuber2010cqttoolbox}; Holighaus \etal\ later derived a sliced form for streaming~\cite{holighaus2013slicq}.
Analysis and synthesis are linear
%over the reals 
and therefore differentiable, allowing gradients to pass through either operation during training.
In software, nnAudio provides differentiable \ac{cqt} analysis without an exact inverse~\cite{cheuk2020nnaudio}, while other implementations provide both invertibility and differentiation~\cite{moliner2022cqtpytorch,hanssian2021slicq}.
Their repeated use in training and inference nevertheless makes computational cost a concern.

The \ac{nsgt} implementation~\cite{balazs2011nsgt,velasco2011icqt} computes an invertible \ac{cqt} with one long \ac{fft}, selection and windowing of overlapping bands in the frequency domain, and a short inverse FFT of band-dependent length for each band.
%The Fourier filterbank approach~\cite{smith2009audio} evaluates an invertible \ac{cqt} with one long \ac{fft}, selection and windowing of overlapping bands, and unequal short inverse FFTs~\cite{velasco2011icqt}.
On GPUs, separate calls for these operations add kernel launches, temporary buffers and data movement between stages, while zero-padding bands to a common length wastes arithmetic and storage.
%Separate GPU calls for these operations introduce kernel launches, temporary buffers and repeated reads and writes of intermediate arrays.
%Zero-padding every band to a common length makes the arrays regular, but adds arithmetic and storage for otherwise unused entries.
%An operation count alone therefore misses the cost of moving data between the Fourier stages.

Trading arithmetic and layout against data movement is a long-standing theme in Fourier computation.
\ac{fft} frameworks organise matrix factors around permutations and memory access~\cite{cooley1965fft,van1992computational,gentleman1966fft,temperton1982mixed,johnson1990framework,bailey1990memory}.
Edelman \etal~\cite{edelman1999future} traded arithmetic for less communication in approximate distributed Fourier transforms.
Related concerns motivate GPU methods for structured products~\cite{dao2022monarch} and fused operations~\cite{fu2024flashfftconv,wu2025turbofno}; cuFFTDx permits FFTs inside custom CUDA kernels~\cite{nvidia2026cufftdx}.

%\begin{figure}[t]
%\centering
%%\resizebox{\columnwidth}{!}{
%\input{fig_flow_v2}
%%}
%\vspace{-10pt}
%\caption{For each schematic band $\lambda$, routing $R_\lambda$ and a length-$M_\lambda$ inverse FFT form one shaded kernel. It produces coefficients $c_\lambda$ from the shared packed spectrum $Z$.}
%\label{fig:flow}
%\end{figure}

\begin{figure}[t]
\centering
\resizebox{0.95\linewidth}{!}{
\input{fig_flow_final}
}
\vspace{-5pt}
%\caption{Three schematic bands indexed by $\lambda$. Packing $P$ and a length-$L$ FFT produce $Z$, where $L=N/2$. Each shaded kernel recovers and weights the band's bins through $R_\lambda$, then produces $c_\lambda$ with a length-$M_\lambda$ inverse FFT.}
\caption{Diagram of the proposed FlashCQT implementation, with three schematic bands indexed by $\lambda$.
Packing $P$ and a length-$L$ FFT give $Z$, with $L=N/2$. Each shaded kernel weights the band's bins via $R_\lambda$ and gives $c_\lambda$ by a length-$M_\lambda$ inverse FFT.}
\vspace{-5pt}
%\vspace{-pt}
\label{fig:flow}
\end{figure}

We propose a factorisation that combines the operations between the Fourier stages into a fixed map.
Pairing consecutive real samples as real and imaginary parts gives a complex sequence of half the length, whose FFT retains all the information needed to recover the signal's Fourier coefficients.
For each coefficient, a precomputed table identifies the FFT outputs to combine and includes the band's spectral window in their weights.
Theorem~\ref{thm:factorisation} bounds the depth and width of this map, which lets the routing and the band's short inverse FFT run in one GPU kernel, avoiding a write and read of the intermediate band spectrum (Fig.~\ref{fig:flow}).
Each band keeps its own length, and the factorisation gives reconstruction and real adjoints for backpropagation.
We call the resulting CUDA implementation Flash-CQT.

The rest of the paper is organised as follows.
Sec.~\ref{sec:transform} describes the invertible CQT in its default form.
Sec.~\ref{sec:structure} derives the packed factorisation, its depth and width bounds, and derives the adjoints for training.
Sec.~\ref{sec:implementation} details the CUDA implementation.
Sec.~\ref{sec:measurements} reports our evaluation, and  Sec.~\ref{sec:conclusion} concludes the paper.

\section{The invertible transform}
\label{sec:transform}
Let $F_n\in\C^{n\times n}$ be the unnormalised \ac{dft} matrix~\cite{van1992computational}, with row $k$ indexing frequency bins and column $j$ indexing samples, both from $0$ to $n-1$.
With $i^2=-1$ and $H$ denoting conjugate transpose, its entries and inverse are
\begin{equation}
 [F_n]_{kj}=e^{-2\pi i kj/n},\qquad F_n^{-1}=\tfrac{1}{n}F_n^H.
 \label{eq:dft}
\end{equation}
The factor $1/n$ normalises the inverse.
An FFT evaluates the product with $F_n$ without forming this matrix.
We use superscript $\top$ for transpose and an overbar for complex conjugation.

The \ac{cqt}, seen as a finite \ac{nsgt}, decomposes a real signal $x\in\R^N$ of even length $N$ into overlapping frequency bands indexed by $\lambda$, each carrying its own coefficient vector $c_\lambda\in\C^{M_\lambda}$ of length $M_\lambda$~\cite{balazs2011nsgt}.
In the constant-$Q$ case, bandwidth grows with frequency~\cite{velasco2011icqt}, so higher bands require more time coefficients over a fixed duration.
The resulting arrays are \emph{ragged}, with several distinct lengths $M_\lambda$.

Such transforms and their inverses can be defined through the \ac{dft} and computed with the \ac{fft}~\cite{velasco2011icqt,holighaus2013slicq}.
The analysis operator $T\colon x\mapsto\{c_\lambda\}_\lambda$ maps the input signal to the coefficients of all bands, where the operator $T_\lambda\colon x\mapsto c_\lambda$ of band $\lambda$ is
%The complete analysis operator $T: x \mapsto \{ c_\lambda \}_\lambda$ collects all bands, $Tx=\{T_\lambda x\}_\lambda$.
%The analysis operator $T_\lambda: x \mapsto c_\lambda$ of band $\lambda$ maps the input signal $x$ to the coefficients
\begin{align}
 c_\lambda&=T_\lambda x=F_{M_\lambda}^{-1}u_\lambda,
 \label{eq:analysis}\\
 u_\lambda&=E_\lambda\diag(g_\lambda)F_Nx.
 \label{eq:band-spectrum}
\end{align}
%\begin{equation}
% c_\lambda=T_\lambda x=F_{M_\lambda}^{-1}u_\lambda,
% \qquad u_\lambda=E_\lambda\diag(g_\lambda)F_Nx.
% \label{eq:analysis}
%\end{equation}
Analysis first computes the spectrum $X=F_Nx$ of the signal with a single \ac{dft} of length $N$.
Band $\lambda$ is specified by a real spectral window $g_\lambda\in\R^N$, applied by entrywise multiplication $\diag(g_\lambda)$, and a selection matrix $E_\lambda\in\R^{M_\lambda\times N}$.
The selection matrix places the bins where $g_\lambda$ is nonzero into their prescribed local frequency slots, leaving unused slots zero and sending no two bins to the same slot.
The resulting band weighted spectrum $u_\lambda$ is mapped by the short inverse \ac{dft} $F_{M_\lambda}^{-1}$ to the band's time coefficients $c_\lambda$.
Besides the positive-frequency bands, the transform retains two sidebands at zero frequency (DC) and half the sample rate (Nyquist), while negative-frequency bands follow by conjugate symmetry.

%Both the analysis and synthesis of such transforms can be defined with the \ac{dft} and computed with the \acp{fft}~\cite{velasco2011icqt,holighaus2013slicq}.
%We start with analysis. 
%Write $X=F_Nx$ for the Fourier coefficients of the input signal.
%%Write $X=F_Nx$ for the Fourier coefficients of the input signal.
%Band $\lambda$ is specified by a real spectral window $g_\lambda\in\R^N$ and a selection matrix $E_\lambda\in\R^{M_\lambda\times N}$, which places the bins where $g_\lambda$ is nonzero into their prescribed local frequency slots, leaving unused slots zero and sending no two bins to the same slot.
%Writing $\diag(g_\lambda)$ for entrywise multiplication by the window, the band analysis operator $T_\lambda$ is defined as 
%\begin{equation}
% c_\lambda = T_\lambda x = F_{M_\lambda}^{-1}u_\lambda,
% \qquad u_\lambda=E_\lambda\diag(g_\lambda)F_N x.
% \label{eq:analysis}
%\end{equation}
%Thus $u_\lambda$ is the band's weighted spectrum and $c_\lambda$ contains its time coefficients.
%We write $T$ for the analysis operator that maps $x$ to the collection of all $c_\lambda$.
%As bandwidth grows with frequency, higher bands need finer time sampling, which increases $M_\lambda$ in steps when bands are grouped.
%Besides the positive-frequency bands, the transform retains two sidebands at zero frequency (DC) and half the sample rate (Nyquist), while negative-frequency bands follow by conjugate symmetry.

We restrict throughout to the \emph{painless} case of frame theory~\cite{daubechies1986painless,balazs2011nsgt}, where band supports fit their local spectra without collisions and together cover every Fourier bin.
Reconstruction (synthesis) then requires only diagonal weighting.
%rather than solving a coupled linear system.
%To reconstruct the waveform, the overlapping windows must together cover each spectral bin.
Let $g_{\lambda,k}$ denote the $k$th entry of $g_\lambda$.
Because the windows overlap, bin $k$ receives contributions from several bands, which synthesis must compensate.
For $0\le k\le N/2$, the aggregate window weight at bin $k$ is $a_k=\sum_\lambda M_\lambda g_{\lambda,k}^2$, with the sum taken over the retained bands, and the dual weights are defined as the normalised windows
%Writing $g_{\lambda,k}$ for the window value at bin $k$, require positive coverage and define the dual weights $h_{\lambda,k}$ by
%\begin{equation}
$h_{\lambda,k}=M_\lambda g_{\lambda,k}/a_k$.
% \label{eq:weights}
%\end{equation}
%\begin{equation}
% a_k=\sum_\lambda M_\lambda g_{\lambda,k}^2>0,
% \qquad h_{\lambda,k}=(M_\lambda g_{\lambda,k})/a_k.
% \label{eq:weights}
%\end{equation}
The windows must cover every bin, so that $a_k>0$, and the factor $M_\lambda$ accounts for the normalisation of the band inverse \ac{dft} in~\eqref{eq:analysis}.
By construction, $\sum_\lambda h_{\lambda,k}g_{\lambda,k}=1$ at every bin.
%For $0\le k\le N/2$, the sum runs over the retained bands.
%These dual weights compensate for window overlap.
%These dual weights compensate for window overlap, while $E_\lambda^\top$ puts each band's bins back in the full spectrum.
%The factors $M_\lambda$ account for the inverse-DFT normalisation in~\eqref{eq:analysis}.
%Reconstruction requires only diagonal weights because the band supports fit without collisions.
%This is the painless construction~\cite{daubechies1986painless,balazs2011nsgt}.

%The synthesis operator $S$ maps each band back to its spectrum with a short \ac{dft}, returns the bins to their original positions, combines the bands and applies a single inverse \ac{dft},
%The inverse transform maps the coefficients to band spectra $u_\lambda=F_{M_\lambda}c_\lambda$, which coincide with~\eqref{eq:analysis} for analyzed coefficients, and combines them as
Synthesis reverses the steps of analysis.
The synthesis operator $S: \{c_\lambda \}_\lambda \mapsto \hat{x}$ maps the coefficients $c_\lambda$ of each band back to its spectrum with the short \ac{dft} $F_{M_\lambda}$, returns the bins to their original positions with $E_\lambda^\top$, weights them with the dual window $h_\lambda$, and applies the inverse \ac{dft} $F_N^{-1}$ to the sum over all bands,
\begin{equation}
 \hat x=S\bigl(\{c_\lambda\}_\lambda\bigr)=F_N^{-1}\sum_\lambda\diag(h_\lambda)E_\lambda^\top F_{M_\lambda}c_\lambda.
 \label{eq:synthesis}
\end{equation}
%\begin{equation}
% \hat x=S(c)=F_N^{-1}\sum_\lambda\diag(h_\lambda)E_\lambda^\top F_{M_\lambda}c_\lambda.
% \label{eq:synthesis}
%\end{equation}
As in analysis, the negative-frequency bands follow by conjugate symmetry, and the DC and Nyquist sidebands are made real, so that $\hat x$ is real.
%Taking real parts at DC and Nyquist, completing the negative frequencies by conjugation and applying $F_N^{-1}$ gives a real waveform.
For analysed coefficients, substitution of~\eqref{eq:analysis} gives $\hat x=x$, so $ST=I$ on real signals in exact arithmetic.
%For analyzed coefficients, substitution of~\eqref{eq:analysis} gives $\widehat x=x$ since $\sum_\lambda h_{\lambda,k}g_{\lambda,k}=1$.
%Write $S$ for the synthesis operator formed by~\eqref{eq:synthesis} and these final steps. 
%In exact arithmetic, $ST=I$ on real signals.
%Writing $S$ for these reconstruction steps gives $ST=I$ on real signals in exact arithmetic.

%For analyzed coefficients, $F_{M_\lambda}c_\lambda=u_\lambda$ as in~\eqref{eq:analysis}, and $\sum_\lambda h_{\lambda,k}g_{\lambda,k}=1$ gives $ST=I$ on real signals in exact arithmetic.

\section{Packed routing and arithmetic structure}
\label{sec:structure}

Because $x$ is real, its spectrum $X=F_Nx$ satisfies $X_{N-k}=\overline{X_k}$ for $1\le k<N$, so the bins $0\le k\le N/2$ determine the whole spectrum.
A full complex FFT does not exploit this redundancy.
%it processes $N$ complex values, $2N$ real numbers, to represent a signal of only $N$ real samples.
A classical real-input construction~\cite{sorensen1987realfft,temperton1983real} halves the transform length.
Let $L=N/2$ and pack consecutive real samples into complex pairs for $0\le j<L$,
\begin{equation}
 z_j=x_{2j}+ix_{2j+1}.
 \label{eq:pack}
\end{equation}
The packed spectrum $Z=F_Lz$ holds exactly $N$ real numbers, as many as the signal, and loses no information.
Any required bin of $X$ follows from two packed bins~\cite{van1992computational},
\begin{equation}
 X_k=\tfrac12(1-i\omega_k)Z_{k\bmod L}
       +\tfrac12(1+i\omega_k)\overline{Z_{-k\bmod L}},
 \label{eq:untangle}
\end{equation}
where $\omega_k=e^{-2\pi i k/N}$ and packed indices wrap modulo $L$.

Substituting~\eqref{eq:untangle} into~\eqref{eq:band-spectrum} gives each band spectrum directly from $Z$.
If entry $m$ of $u_\lambda$ holds bin $k$,
\begin{equation}
 [u_\lambda]_m=\tfrac12 g_{\lambda,k}\bigl[(1-i\omega_k)Z_{k\bmod L}+(1+i\omega_k)\overline{Z_{-k\bmod L}}\bigr],
 \label{eq:route}
\end{equation}
%with $\alpha_{\lambda,m}=\tfrac12 g_{\lambda,k}(1-i\omega_k)$ and $\beta_{\lambda,m}=\tfrac12 g_{\lambda,k}(1+i\omega_k)$; 
entries $m$ that hold no bin $k$ are zero.
We call the map $R_\lambda\colon Z\mapsto u_\lambda$ the \emph{router} of band $\lambda$.
It combines bin recovery, windowing and selection in one table without forming $X$.
Conjugation makes the router real-linear, though generally not complex-linear.
With $P$ denoting the packing in~\eqref{eq:pack}, the analysis of band $\lambda$ factors exactly as
\begin{equation}
 c_\lambda= T_\lambda x=F_{M_\lambda}^{-1}R_\lambda F_LP x.
 \label{eq:packed-analysis}
\end{equation}

Synthesis runs in the opposite direction.
Inverting~\eqref{eq:untangle} gives each packed bin from two bins of the half-spectrum,
\begin{equation}
 Z_q=\tfrac12
 \bigl[(1+i\overline{\omega_q})X_q+(1-i\overline{\omega_q})\overline{X_{L-q}}
 \bigr]
 ,
 \qquad 0\le q<L,
 \label{eq:retangle}
\end{equation}
where the conjugate term accounts for the negative frequencies.
In~\eqref{eq:synthesis}, band $\lambda$ contributes $X^{(\lambda)}_k=h_{\lambda,k}[E_\lambda^\top u_\lambda]_k$ for $0\le k\le N/2$, made real at DC and Nyquist.
Substituting $X^{(\lambda)}$ into~\eqref{eq:retangle} gives its packed spectrum $Z^{(\lambda)}$, and we call $\widetilde R_\lambda\colon u_\lambda\mapsto Z^{(\lambda)}$ the \emph{dual router} of band $\lambda$.
With $P^{-1}$ unpacking each complex sample into two real ones, synthesis is
\begin{equation}
 S\bigl(\{c_\lambda\}_\lambda\bigr)=P^{-1}F_L^{-1}\sum_\lambda\widetilde R_\lambda F_{M_\lambda}c_\lambda,
 \label{eq:packed-synthesis}
\end{equation}
%\begin{equation}
%\begin{align}
% S(\{ c_\lambda \}_\lambda)&=P^{-1}F_L^{-1}\sum_\lambda Z^{(\lambda)}, \\
%  \text{where}\quad Z^{(\lambda)}&=\widetilde R_\lambda F_{M_\lambda}c_\lambda,
% \label{eq:packed-synthesis}
%\end{align}
so the band contributions are added in the packed domain and a single inverse FFT of length $L$ serves all bands.

\subsection{Depth and block width}

Analysis~\eqref{eq:packed-analysis} has three stages, the packed transform $F_L$, routing $R_\lambda$, and band inverse transforms $F_{M_\lambda}^{-1}$; synthesis~\eqref{eq:packed-synthesis} reverses their order.
Each stage splits into \emph{layers} of small, independent linear blocks, which can be evaluated in parallel.
We describe this structure independently of hardware by its \emph{depth}, the number of layers that run in sequence, and its \emph{block width}, the larger of a block's real input and output dimensions.
Wiring between layers may reorder entries, change signs or insert zeros at no cost in this count, but may not duplicate inputs.

The Cooley--Tukey construction and its mixed-radix generalisations write a Fourier transform of length $n=r_1\cdots r_d$ as $d$ layers of small Fourier blocks, separated by permutations and twiddle factors~\cite{cooley1965fft,van1992computational,temperton1982mixed}.
The \emph{radix} $r_j$ is the length of a layer's Fourier blocks, so radix 2 combines pairs, radix 4 groups of four, and so on.
If every radix is at most $r$, let $\tau_r(n)$ be the least number of factors in a product $n=r_1\cdots r_d$ with $2\le r_j\le r$.
For example, $\tau_{32}(2^a)=\lceil a/5\rceil$, so the packed transform of length $32{,}768=32^3$ needs three layers.

%To describe the factorization independently of hardware, we count layers of independent small linear blocks, measuring each block's width by the larger of its real input and output dimensions.
%Between layers, wiring may permute entries, change signs or insert zeros, but does not duplicate inputs.
%If each small complex Fourier block has length at most $r$, let $\tau_r(n)$ be the least number of factors in a product $n=r_1\cdots r_d$ with $2\le r_j\le r$.
%For example, $\tau_{32}(2^a)=\lceil a/5\rceil$.

By~\eqref{eq:route}, every entry of $u_\lambda$ reads one pair of packed bins $\{Z_q,Z_{-q}\}$, namely when it holds bin $q$ or bin $L-q$.
Grouping entries by this unordered pair splits the router into independent blocks, each with at most four real inputs.
We call the number of nonzero entries, over all bands, that read the pair $\{Z_q,Z_{-q}\}$ its \emph{fan-out} $D_q$, and let $D=\max_q D_q$, so that no router block has more than $2D$ real outputs.

These counts give the following bound.

%Group the packed bins into unordered pairs $\{q,-q\bmod L\}$.
%Let $D_q$ count the nonzero band-spectrum entries supplied by a pair after folding the real spectrum, and let $D=\max_qD_q$.
%The count precedes the band inverse FFT.

\begin{samepage}
\begin{theorem}[Ragged Fourier factorisation]
\label{thm:factorisation}
If $L$ and every $M_\lambda$ factor into integers between $2$ and $r$, analysis and synthesis can be evaluated with depth at most
\begin{equation}
 d=\underbrace{\tau_r(L)}_{\text{packed FFT}}+\underbrace{1}_{\text{router}}+\underbrace{\max_\lambda\tau_r(M_\lambda)}_{\text{band FFTs}},
 \label{eq:depth}
\end{equation}
and real block width at most
\begin{equation}
 w=\max\bigl\{\underbrace{2r}_{\text{FFT block}},\underbrace{4}_{\text{router in}},\underbrace{2D}_{\text{router out}}\bigr\}.
 \label{eq:width}
\end{equation}
\end{theorem}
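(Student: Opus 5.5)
The plan is to read the depth and width bounds directly off the exact factorisations~\eqref{eq:packed-analysis} and~\eqref{eq:packed-synthesis}, treating each of the three stages separately and then concatenating their layers. The packing $P$ and the unpacking $P^{-1}$ only reorder real samples into real and imaginary parts, so they count as free wiring. For the packed transform $F_L$ (or $F_L^{-1}$ in synthesis) I would take a factorisation $L=r_1\cdots r_d$ with $d=\tau_r(L)$ and $2\le r_j\le r$, and apply the mixed-radix Cooley--Tukey construction~\cite{cooley1965fft,temperton1982mixed}. This writes the transform as $d$ layers of independent complex Fourier blocks of length $r_j\le r$, interleaved with permutations. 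The twiddle factors are diagonal unit-modulus multiplications, so they can be absorbed into the adjacent block without changing its input or output dimensions. Each block then has at most $r$ complex, that is $2r$ real, inputs and outputs. The inverse differs only by conjugated entries and the scalar $1/L$, which is also absorbed into a block. The same argument applied to every band transform $F_{M_\lambda}^{\pm1}$ gives $\tau_r(M_\lambda)$ layers per band. Bands are independent, so they run in parallel, and shorter bands are padded with identity layers. This stage therefore has depth $\max_\lambda\tau_r(M_\lambda)$ and width $2r$.

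For the router, I would use~\eqref{eq:route}. Each nonzero entry of $u_\lambda$ depends only on the unordered pair $\{Z_q,Z_{-q}\}$, and the real-linear map from those two complex numbers has four real inputs. Grouping all outputs that read the same pair, across all bands, gives one block per pair. These blocks have disjoint inputs, since the pairs partition the packed bins, and each block has at most $2D$ real outputs by the definition of $D$. Zero entries of $u_\lambda$ are inserted zeros, which is free wiring. Distinct blocks never read the same input, so no input is duplicated. The router is thus a single layer of width $\max\{4,2D\}$.

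For synthesis I would check that the dual router $\widetilde R_\lambda$, summed over $\lambda$, has the transposed structure. By~\eqref{eq:retangle}, the outputs $Z_q$ and $Z_{L-q}$ depend only on $X_q$ and $X_{L-q}$, and hence on those band-spectrum entries holding bins in $\{q,L-q\}$, with the DC and Nyquist real-part projections included. Grouping by the same unordered pair gives blocks with at most $2D$ real inputs and four real outputs. The sum over bands is computed inside each block, so it needs no extra layer.

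The main obstacle is the bookkeeping at the self-paired bins $q\equiv -q \pmod L$, that is $q=0$ and $q=L/2$. Here bins $0$ and $N/2$, and the bins $L/2$ and $3L/2$ folded by conjugate symmetry, all land on the same packed pair. I would verify that each such block still has at most four real inputs, or four real outputs in synthesis, and that its fan-out is correctly counted in $D_q$. The conjugation in~\eqref{eq:route} must be handled as a real-linear map with sign changes, not as a duplication of inputs. Adding the three stage depths then gives~\eqref{eq:depth}, and taking the maximum of the three stage widths gives~\eqref{eq:width}.
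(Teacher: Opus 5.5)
Your proposal is correct and follows the paper's proof essentially step for step: Cooley--Tukey layers with absorbed twiddle factors give the $\tau_r(L)$ packed-FFT layers and $\max_\lambda\tau_r(M_\lambda)$ band-FFT layers of width $2r$, grouping by packed pair gives a single router layer of width $\max\{4,2D\}$ (with the dual router using the same pairs with reversed dimensions), and identity layers align the shorter band plans. The self-paired case you single out as the main obstacle is actually immediate: for $q\in\{0,L/2\}$ the pair collapses to the single value $Z_q$ (bins $0$ and $N/2$ read $Z_0$, and bins $L/2$ and $3L/2$ read $Z_{L/2}$), so that block has only two real inputs (outputs in synthesis), and $D_q$ counts its entries by definition.
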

\end{samepage}

%\begin{samepage}
%\begin{theorem}[Ragged Fourier factorization]
%\label{thm:factorization}
%If $L$ and every $M_\lambda$ factor into integers between $2$ and $r$, analysis and synthesis can be evaluated with depth at most
%\begin{equation}
% d=\tau_r(L)+1+\max_\lambda\tau_r(M_\lambda),
% \label{eq:depth}
%\end{equation}
%and block width at most $\max\{2r,4,2D\}$.
%\end{theorem}
%\end{samepage}

%\begin{proof}
%Cooley--Tukey gives the first and last groups of layers, with twiddle factors absorbed into the small Fourier blocks~\cite{cooley1965fft,van1992computational}.
%After grouping by packed pair, the router consists of independent blocks with at most four real inputs and $2D$ real outputs.
%Both routers occupy one layer, since synthesis uses the same pairs with reversed dimensions.
%Identity layers align shorter band plans without zero-padding their coefficient vectors.
%\end{proof}
\begin{proof}
Cooley--Tukey gives the $\tau_r(L)$ and $\tau_r(M_\lambda)$ Fourier layers, with twiddle factors absorbed into the small Fourier blocks of at most $2r$ real values~\cite{cooley1965fft,van1992computational}.
Grouped by packed pair as above, the router consists of independent blocks with at most four real inputs and $2D$ real outputs.
The router and dual router each occupy one layer, since synthesis uses the same pairs with reversed dimensions.
Identity layers align shorter band plans without zero-padding their coefficient vectors.
\end{proof}

\subsection{Adjoints for training}
\label{sec:adjoints}
Backpropagation through analysis and synthesis uses their adjoints.
Write $S_\lambda$ for band $\lambda$'s contribution in~\eqref{eq:packed-synthesis}, so $S(c)=\sum_\lambda S_\lambda c_\lambda$.
Gradients use the real inner product $\langle u,v\rangle_{\R}=\Rea(u^Hv)$ on complex arrays, where $\Rea$ takes the real part.
Write $\dagger$ for its adjoint, which moves a linear map from one side of this inner product to the other.
Packing preserves the inner product, so $P^\dagger=P^{-1}$.
Because of its conjugate term, the router has the form $Rz=Az+B\overline z$ for complex matrices $A$ and $B$, and its real adjoint is $R^\dagger y=A^Hy+B^\top\overline y$.

Reversing the order of the factors and taking their real adjoints gives
\begin{align}
 T_\lambda^\dagger
   &=\frac{L}{M_\lambda}P^{-1}F_L^{-1}R_\lambda^\dagger F_{M_\lambda},
   \label{eq:analysis-adjoint}\\
 S_\lambda^\dagger
   &=\frac{M_\lambda}{L}F_{M_\lambda}^{-1}\widetilde R_\lambda^\dagger F_LP.
   \label{eq:synthesis-adjoint}
\end{align}
Thus $T_\lambda^\dagger$ has the form of synthesis with $R_\lambda^\dagger$ in place of $\widetilde R_\lambda$, and $S_\lambda^\dagger$ the form of analysis with $\widetilde R_\lambda^\dagger$ in place of $R_\lambda$.
Hence $R_\lambda^\dagger$ reads the same pairs as the dual router, with $g_{\lambda,k}$ instead of $h_{\lambda,k}$ and twice the weight at DC and Nyquist, so backpropagation reuses the same index tables and Fourier routines.
Because these maps are fixed and linear, no transform activations need be saved.

\section{Implementation}
\label{sec:implementation}
The CUDA implementation\footnote{\url{https://github.com/cucuwritescode/Flash-CQT}} uses FP32 arithmetic.
Table~\ref{tab:radix} gives the principal configuration, for a slice of $N=65536$ and 44.1~kHz, which follows the multi-resolution layout of~\cite{dacosta2025mrcqtdiff}.
It combines three CQT resolutions, covering 3, 4, and 2 octaves at 8, 16, and 32 bands per octave.
%, respectively.
The 152 positive-frequency bands share one coefficient length per octave, obtained by rounding its largest window support up to a power of two.
Band centres are rounded to Fourier bins, with the highest centre adjusted towards Nyquist.
DC and Nyquist sidebands complete the coverage.

\begin{table}[t]
\caption{Frequency band layout at $N=65,536$ and 44.1 kHz.
%Centre frequencies are rounded to hertz.
DC and Nyquist sidebands have lengths 128 and 2048.
}
\label{tab:radix}
\smallskip
\centering
\small
\setlength{\tabcolsep}{4pt}
\begin{tabular}{crrrr}
\toprule
Octave & Bands & Centres (Hz) & $M_\lambda$ & Radix plan\\
\midrule 
1 &  8 & 43--81        &   32 & $32$\\
2 &  8 & 88--165       &   64 & $8\times8$\\
3 &  8 & 180--337      &  128 & $16\times8$\\
4 & 16 & 345--666      &  128 & $16\times8$\\
5 & 16 & 695--1342     &  256 & $16\times16$\\
6 & 16 & 1402--2708    &  512 & $32\times16$\\
7 & 16 & 2830--5464    & 1024 & $32\times32$\\
8 & 32 & 5512--10905   & 1024 & $32\times32$\\
9 & 32 & 11147--21810  & 2048 & $32\times8\times8$\\
\bottomrule
\end{tabular}
\end{table}

The fast kernels are specialised to a slice of $N=\,\,$65,536, for which the packed length $L=32,768=32^3$ gives three radix-32 layers with fixed memory offsets.
The band kernels read routing tables and the radix plans of Table~\ref{tab:radix} at run time, but also assume this packed length.
Packing is performed within the first layer, and the last writes interleaved real and imaginary values in the order in which the FFT computes them.
The routing tables index that order directly~\cite{temperton1982mixed}.

One CUDA thread block of 256 threads handles each band and batch element, using at most 16.25 KiB of on-chip shared memory, including the precomputed Fourier weights.
The index maps avoid write conflicts and permit stages to overwrite their inputs within one buffer.
Synthesis evaluates~\eqref{eq:packed-synthesis}, using band FFTs and atomic FP32 additions to accumulate dual-routed contributions before the packed inverse FFT.

Routing records are stored in the order in which the threads read them.
If entry $m$ of $u_\lambda$ holds bin $k$, its 16-byte record holds two 16-bit indices into $Z$, the window value $g=g_{\lambda,k}$, and the complex weight $\alpha=\tfrac12 g(1-i\omega_k)$ from~\eqref{eq:route}.
Recovering the other weight as $\tfrac12 g(1+i\omega_k)=g-\alpha$ reduces table size.

For this layout, the fan-out is $D=7$, attained at packed pairs 59 through 63.
Theorem~\ref{thm:factorisation} gives depth $3+1+3=7$ and real block width 64 when $r=32$.
Under this radix cap, seven layers are necessary when Fourier-stage wiring preserves complex entries and the construction retains the packed spectrum $Z$, the band spectra $u_\lambda$, and a separate router layer.
Indeed, $d$ radix-32 layers let an output depend on at most $32^d$ complex inputs.
Every output of the packed DFT depends on all 32,768 inputs, and every output of a length-2048 band inverse DFT depends on its 1,411 nonzero inputs.
Each exceeds $32^2$ and requires three layers; the router requires one more.
This bound concerns the stated arithmetic structure, not runtime optimality or GPU kernel launches.

%\subsection{Sliced streaming}
%\label{sec:slicing}
Signals shorter than the slice length $N=\,\,$65,536 samples require omitting the lowest bands or lowering $Q$, which shortens their windows at the cost of frequency selectivity, and would need new specialised kernels. For signals longer than $N=\,\,$65,536, we transform slices of $N$ samples independently, with 50\% overlap and a Tukey transition window~\cite{holighaus2013slicq}.
%Each slice uses the same filterbank and kernels, so memory stays bounded as recording duration grows.
%The shifted windows sum to one, so overlap-adding the reconstructed slices recovers the waveform.
%The wrapper retains two half-slices of state and has one slice of algorithmic latency, or 1486.1 ms for the principal configuration.
%Sliced coefficients need not equal whole-signal coefficients.

\section{EVALUATION}
\label{sec:measurements}
Measurements use the configuration in Table~\ref{tab:radix}, with $N=65536$, FP32 arithmetic, and inputs and tables resident on the GPU.
The devices are an NVIDIA A100-SXM4 with 80 GB and a V100-SXM2 with 32 GB, both using PyTorch 2.12.0 and CUDA 12.6.
CUDA events give median times over 50 runs after three warm-ups; the text quotes A100 results unless stated otherwise.

The \emph{CQT baseline} evaluates the same operator in PyTorch using cuFFT-backed calls.
It computes a real FFT, gathers and weights the bands using precomputed tables, and batches equal-length band transforms by octave.
The inverse weights and combines their spectra before one real inverse FFT.
Both methods retain the same windows, native coefficient lengths, and sidebands, without zero-padding to a common length.
Differences in time and memory therefore come from the implementation, not from the operator.

%This compares the complete implementations, including routing, fusion and data layout.

The STFT provides a reference for the cost of a common transform with nearly the same number of coefficients on the same device.
With a 2048-sample Hann window, hops of 512 samples, and centred framing, it produces 132,225 complex coefficients, within 0.6\% of the CQT's 132,992, while using uniform frequency resolution.
Eager measurements use \texttt{torch.stft} and \texttt{torch.istft}.
Under graph replay, the inverse is an overlap-add with precomputed window normalisation, which agrees with \texttt{torch.istft} above 120 dB.

\begin{figure}[t]
\centering
\input{fig_perf_final}
 \vspace{-15pt}
 \caption{Time against batch size on A100 (solid) and V100 (dashed). (a) Round trip (analysis and synthesis) under CUDA graph replay. (b) Round trip with backward pass (analysis, synthesis, squared loss and backpropagation) in eager mode.}
%\caption{Round-trip (a) and differentiable-step (b) times on A100 (solid) and V100 (dashed). %Graph replay uses the overlap-add STFT inverse; the eager step uses \texttt{torch.istft}.
%}
\label{fig:perf}
\end{figure}

\subsection{Analysis and reconstruction}
Each round trip is captured as a CUDA graph, in Fig.~\ref{fig:perf}, which records its kernel sequence for inputs of fixed shape, and timed under replay.
This avoids Python dispatch between kernels and measures the GPU execution that the factorisation targets.
Flash-CQT is 5.2 times faster than the CQT baseline at $B=1$, and across batch sizes $B$ the speedup ranges from 2.1 to nearly 8 on A100 and 2.0 to 6.0 on V100 (Fig.~\ref{fig:perf}(a)).
Under graph replay, the proposed CQT takes 2.3 times the STFT time at $B=1$ and reaches parity near $B=16$.
This comparison depends on the inverse implementation, since at $B=128$ the native eager STFT round trip takes 1.39 ms, below the proposed CQT's 1.66 ms and the graph STFT's 2.32 ms.
%At $B=128$, FlashCQT takes 1.66 ms, against 2.32 ms for the STFT.

Table~\ref{tab:snr-mem} reports reconstruction \ac{snr}, defined as $10\log_{10}(\norm{x}_2^2/\norm{x-STx}_2^2)$, as the median over batch sizes.
Flash-CQT gives 129.8 dB, varying by at most 0.1 dB, or about $3\times10^{-7}$ relative error in the Euclidean norm.
The measured error is a few times the FP32 machine epsilon of $1.2\times10^{-7}$ and is consistent with round-off; the exact-arithmetic identity $ST=I$ does not imply bitwise reconstruction.
Transient workspace is the peak extra allocation during a call, excluding inputs and persistent tables, reported per signal because it grows linearly with $B$.
Keeping the spectrum packed and overwriting buffers in place reduces this workspace by 36 to 38\% across batches.

\begin{table}[t]
\vspace{-6.2pt}
\caption{Reconstruction SNR, workspace per signal and $B{=}1$ timings on an
NVIDIA A100. Round trips use graph replay; steps include eager backpropagation. Best values among the CQT implementations are bold.}
\label{tab:snr-mem}
\smallskip
\centering
\small
\setlength{\tabcolsep}{3pt}
\resizebox{\columnwidth}{!}{%
\begin{tabular}{lcrrrr}
\toprule
 & Multi- & SNR & Workspace & Round trip & Step\\
Method & res. & (dB) & (MB) & (ms) & (ms)\\
\midrule
STFT (reference)      & \xmark & 133.1 & 4.3 & 0.051 &  0.838\\
\midrule 
CQT baseline          & \cmark & 126.8 & 4.2 & 0.601 & 10.003\\
Flash-CQT (proposed)   & \cmark & \textbf{129.8} & \textbf{2.6} & \textbf{0.116} &  \textbf{1.969}\\
\bottomrule
\end{tabular}
}
\end{table}

\subsection{Differentiable round trip}
The second experiment adds squared error against a target and backpropagation through both transforms to measure a differentiable round trip without a neural network.
It runs in PyTorch eager mode, so its times include host dispatch and autograd overhead.
The baselines use PyTorch autograd, while Flash-CQT uses the adjoints in Section~\ref{sec:adjoints}.
Analysis and synthesis gradients agree with autograd at 130.05 and 129.48 dB, respectively.

Eager speedups over the CQT baseline range from 2.2 to 5.3 on A100 and 1.7 to 3.0 on V100 (Fig.~\ref{fig:perf}(b)).
In the baseline, overlapping windows make the backward pass of the band gather a scatter-add with repeated indices, whereas the explicit adjoints reuse the Fourier stages and routing tables.
Relative to the STFT, the CQT baseline costs 12 times as much at $B=1$ and 4.3 times at $B=128$, whereas FlashCQT costs 2.35 and 1.9 times as much.
Effects on complete model training and comparison with a matched cuFFTDx implementation remain open.

\section{Conclusion}
\label{sec:conclusion}
%\textcolor{red}{TODO: it is important revise and extend this. We must remark the potential benefits and applications of this implementation}

This paper presented Flash-CQT, an exact factorisation of the invertible \ac{cqt} for efficient GPU computation.
This factorisation combines the operations between the Fourier stages without changing the invertible \ac{cqt}.
Fixed routing recovers and weights each band's bins inside its transform kernel, preserving native lengths, while the factors give reconstruction, real adjoints, and arithmetic bounds.
On the tested GPUs, this organisation reduces round-trip time by factors of two to eight against the CQT baseline, and uses over a third less temporary workspace while retaining about 130 dB reconstruction \ac{snr}.

%The construction also applies to other painless NSGTs with real spectral windows satisfying the stated support and radix conditions, beyond the CQT configuration evaluated in this paper.

%Future work will also measure the gains in complete model training and compare against a similarly fused \ac{cqt} implementation.

%Their routing degrees and performance remain to be established, as do the gains in complete model training and against a comparably fused CQT baseline.
%Together these questions extend the present work towards efficient invertible representations with more general frequency resolutions.

These savings matter most where the transform is applied repeatedly, as in the training and inference of neural audio models.
Applications that require invertible \ac{cqt} coefficients, such as diffusion-based audio generation~\cite{dacosta2025mrcqtdiff}, music source separation~\cite{hanssian2021slicq}, audio bandwidth extension~\cite{moliner2024blind,ali2025exploiting} and timbre transfer~\cite{huang2019timbretron}, can therefore use the transform within a small factor of the STFT cost while retaining its logarithmic frequency resolution.
\ac{cqt}-based training losses and discriminators~\cite{li2024mert,welker2025flowdec,gu2024cqtdiscriminator} can likewise benefit from the fused analysis kernels even without invertibility, and from the exact adjoints of Sec.~\ref{sec:adjoints} for their gradients.
The construction also applies to other painless \acp{nsgt} beyond the CQT configuration evaluated in this paper, such as auditory filterbanks~\cite{necciari2013erblet}, although their fan-out and performance remain to be established.
Slicing further extends the implementation to long recordings and streaming.

%applications
%Audio generation \cite{dacosta2025mrcqtdiff}
%Bandwidth extension \cite{moliner2024blind,ali2025exploiting}
%Source separation \cite{hanssian2021slicq}
%timbre transfer \cite{huang2019timbretron}

%losses
%discriminators \cite{gu2024cqtdiscriminator}
%losses for audio coding \cite{welker2025flowdec}
%representation learning  \cite{li2024mert}

\clearpage
\balance
%\section{Compliance with Ethical Standards}
%This work studies numerical algorithms and their GPU implementation. It involved no human or animal subjects.

\section{Acknowledgements}
\vspace{-5pt}
We acknowledge the computational resources provided by the Aalto Science-IT project. This work was supported by the HUCE infrastructure of the Aalto School of Electrical Engineering. This research was funded in part by the Research Council of Finland (grant no. 371845).

%{\small
\renewcommand{\thebibliography}[1]{%
  \section{References}\list
  {[\arabic{enumi}]}{\settowidth\labelwidth{[#1]}\leftmargin\labelwidth
   \advance\leftmargin\labelsep
   \usecounter{enumi}%
   \setlength{\itemsep}{1pt plus 0.2ex}%
   \setlength{\parsep}{0pt plus 0.2ex}%
   \setlength{\topsep}{0pt plus 0.2ex}}%
  \def\newblock{\hskip .11em plus .33em minus .07em}
  \sloppy\clubpenalty4000\widowpenalty4000
  \sfcode`\.=1000\relax}
\let\endthebibliography=\endlist
\bibliographystyle{IEEEbib}
\bibliography{refs}
%}
\end{document}

%% file: fig_flow_final.tex
\begin{tikzpicture}[x=1cm,y=1cm,>=Latex,font=\fontsize{9}{10.5}\selectfont]
\definecolor{bandA}{RGB}{31,95,160}
\definecolor{bandB}{RGB}{17,122,110}
\definecolor{bandC}{RGB}{178,96,18}
\tikzset{
  blk/.style={draw=black!60,fill=black!4,align=center,inner sep=2.5pt,
              rounded corners=1pt},
  txt/.style={inner sep=1pt,align=center}}
\node[txt] at (.65,1.62) {$x\in\mathbb{R}^{N}$};
\node[blk,minimum width=1.15cm,minimum height=.50cm] (P) at (.65,.78) {pack $P$};
\node[blk,minimum width=1.15cm,minimum height=.50cm] (FL) at (.65,0) {FFT $F_L$};
\draw[->] (.65,1.40)--(P.north);
\draw[->] (P.south)--(FL.north);
\draw[->] (FL.east)--(1.80,0);
\fill[black!4] (1.80,-1.06) rectangle (2.08,1.06);
\draw[black!55] (1.80,-1.06) rectangle (2.08,1.06);
\foreach \i in {1,...,12}{
  \draw[black!30] (1.80,{-1.06+2.12*\i/13})--(2.08,{-1.06+2.12*\i/13});}
\node[txt] at (1.94,1.62) {$Z\in\mathbb{C}^{L}$};
\foreach \lab/\yy/\cnt/\col in {1/-.84/3/bandA,2/0/5/bandB,3/.84/8/bandC}{
  \draw[\col!40,fill=\col!6,rounded corners=1.5pt]
    (3.02,\yy-.33) rectangle (5.67,\yy+.33);
  \node[blk,draw=\col!75,fill=\col!12,minimum width=.63cm,
        minimum height=.44cm] (r\lab) at (3.48,\yy) {$R_{\lab}$};
  \node[blk,minimum width=1.36cm,minimum height=.44cm]
        (f\lab) at (4.89,\yy) {IFFT $M_{\lab}$};
  \draw[\col,->] (2.08,\yy)--(r\lab.west);
  \node[txt] at (2.57,\yy+.22) {\footnotesize $\lambda=\lab$};
  \draw[->] (r\lab.east)--(f\lab.west);
  \pgfmathsetmacro{\xs}{6.84-.09*\cnt}
  \draw[->] (f\lab.east)--(\xs-.04,\yy);
  \foreach \j in {1,...,\cnt}{
    \fill[\col,opacity=.22] ({\xs+.18*(\j-1)},\yy-.10) rectangle ++(.18,.20);
    \draw[black!45] ({\xs+.18*(\j-1)},\yy-.10) rectangle ++(.18,.20);}
  \node[txt] at (6.84,\yy+.33) {$c_{\lab} \in \mathbb{C}^{M_\lab}$};
}
\draw[black!70,->] (6.12,-1.10)--(7.56,-1.10);
\node[txt,anchor=north] at (6.84,-1.14) {time};
\node[txt,rotate=90] at (7.76,0) {log-freq.};
\draw[black!70,->] (7.96,-1.06)--(7.96,1.06);
\node[txt] at (3.48,1.62) {routing};
\node[txt] at (4.89,1.62) {band IFFT};
\node[txt] at (6.82,1.62) {coefficients};
\node[txt,anchor=north] at (4.35,-1.30) {one fused kernel per band};
\end{tikzpicture}

%% file: fig_perf_final.tex
\begin{tikzpicture}
\definecolor{cProp}{RGB}{31,95,160}   
\definecolor{cCuff}{RGB}{178,96,18}   
\definecolor{cStft}{RGB}{170,51,119}  
\pgfplotsset{
  every axis/.append style={
    font=\fontsize{9}{10.5}\selectfont,
    xmode=log, log basis x=2, ymode=log,
    xtick={1,4,16,64}, xticklabels={1,4,16,64},
    xmin=0.85, xmax=150,
    log ticks with fixed point,
    grid=major,
    grid style={black!12,line width=.3pt},
    axis line style={black!55},
    tick align=outside, tick style={black!55},
    xlabel near ticks, ylabel near ticks,
    width=.46\columnwidth, height=.51\columnwidth,
    xlabel={batch size $B$},
    xticklabel style={yshift=2.5pt},
    xlabel style={yshift=3pt},
    title style={align=center},
  },
  every axis plot/.append style={line width=.65pt},
}

\begin{groupplot}[group style={group size=2 by 1, horizontal sep=.75cm}]

\nextgroupplot[
  title={(a) Round trip\\(graph replay)},
  ylabel={time (ms)},
  ymin=0.025, ymax=9,
  ytick={0.03,0.1,0.3,1,3}, yticklabels={0.03,0.1,0.3,1,3},
  legend style={draw=none, fill=none, font=\fontsize{9}{10.5}\selectfont, legend columns=3,
    /tikz/every even column/.append style={column sep=2pt}},
  legend to name=legperf,
]
\addlegendimage{cCuff,solid,mark=square*,mark size=1.0pt}\addlegendentry{CQT baseline}
\addlegendimage{cProp,solid,mark=*,mark size=1.1pt}\addlegendentry{FlashCQT (Proposed)}
\addlegendimage{cStft,solid,mark=triangle*,mark size=1.3pt}\addlegendentry{STFT (ref.)}
\addplot[cCuff,solid,mark=square*,mark size=1.0pt,mark options={solid}] coordinates {(1,0.601) (2,0.845) (4,0.681) (8,0.808) (16,0.996) (32,1.263) (64,1.769) (128,3.599)};
\addplot[cProp,solid,mark=*,mark size=1.1pt,mark options={solid}] coordinates {(1,0.116) (2,0.109) (4,0.131) (8,0.185) (16,0.273) (32,0.461) (64,0.844) (128,1.656)};
\addplot[cStft,solid,mark=triangle*,mark size=1.3pt,mark options={solid}] coordinates {(1,0.051) (2,0.072) (4,0.097) (8,0.160) (16,0.291) (32,0.604) (64,1.186) (128,2.324)};
\addplot[cCuff,dashed,mark=square*,mark size=1.0pt,mark options={solid}] coordinates {(1,0.789) (2,0.872) (4,0.864) (8,0.916) (16,1.133) (32,1.684) (64,3.049) (128,7.29)};
\addplot[cProp,dashed,mark=*,mark size=1.1pt,mark options={solid}] coordinates {(1,0.139) (2,0.145) (4,0.171) (8,0.247) (16,0.41) (32,0.801) (64,1.49) (128,2.842)};
\addplot[cStft,dashed,mark=triangle*,mark size=1.3pt,mark options={solid}] coordinates {(1,0.066) (2,0.089) (4,0.153) (8,0.311) (16,0.59) (32,1.133) (64,2.229) (128,4.41)};

\nextgroupplot[
  title={(b) With backward\\(eager)},
  ymin=0.4, ymax=32,
  ytick={0.5,1,2,5,10,20}, yticklabels={0.5,1,2,5,10,20},
  legend style={draw=none, fill=none, font=\fontsize{9}{10.5}\selectfont, legend columns=2,
    /tikz/every even column/.append style={column sep=8pt}},
  legend to name=leggpu,
]
\addlegendimage{black!70,solid}\addlegendentry{A100}
\addlegendimage{black!70,dashed}\addlegendentry{V100}
\addplot[cCuff,solid,mark=square*,mark size=1.0pt,mark options={solid}] coordinates {(1,10.003) (2,10.131) (4,10.343) (8,10.373) (16,10.684) (32,11.470) (64,13.728) (128,19.556)};
\addplot[cProp,solid,mark=*,mark size=1.1pt,mark options={solid}] coordinates {(1,1.969) (2,1.975) (4,2.003) (8,2.012) (16,2.031) (32,2.474) (64,4.601) (128,8.779)};
\addplot[cStft,solid,mark=triangle*,mark size=1.3pt,mark options={solid}] coordinates {(1,0.838) (2,0.839) (4,0.833) (8,0.843) (16,0.921) (32,1.449) (64,2.448) (128,4.573)};
\addplot[cCuff,dashed,mark=square*,mark size=1.0pt,mark options={solid}] coordinates {(1,12.272) (2,12.444) (4,12.575) (8,12.836) (16,13.463) (32,14.663) (64,17.508) (128,27.65)};
\addplot[cProp,dashed,mark=*,mark size=1.1pt,mark options={solid}] coordinates {(1,4.327) (2,4.386) (4,4.397) (8,4.393) (16,4.445) (32,5.076) (64,8.734) (128,16.713)};
\addplot[cStft,dashed,mark=triangle*,mark size=1.3pt,mark options={solid}] coordinates {(1,1.727) (2,1.735) (4,1.737) (8,1.744) (16,1.852) (32,2.735) (64,4.727) (128,8.913)};
\end{groupplot}

\coordinate (mid) at ($(group c1r1.south)!0.5!(group c2r1.south)$);
\node[anchor=north] (lm) at ([yshift=-2pt]mid |- group c1r1.outer south) {\pgfplotslegendfromname{legperf}};
\node[anchor=north] at ([yshift=9pt]lm.south) {\pgfplotslegendfromname{leggpu}};
\end{tikzpicture}